\documentclass[10pt,reqno]{amsart}
\usepackage[T1]{fontenc}
\usepackage[utf8]{inputenc}
\usepackage[english]{babel}
\usepackage{lmodern}
\usepackage{amsmath,amssymb,amsthm,mathtools}
\usepackage{booktabs,tabularx,array}
\usepackage{needspace}
\usepackage{microtype}
\usepackage[colorlinks=true,linkcolor=black,citecolor=black,urlcolor=blue]{hyperref}
\hypersetup{
 pdftitle={Construction and formal verification of a special topology on Z^n},
 pdfauthor={Ignacio Vega-Páez},
 pdfsubject={General theorem on Z^n and corollary for the integers},
 pdfkeywords={Marcus-Wyse topology, connectedness, Lean 4, Mathlib}
}
\usepackage{enumitem}
\setlist{nosep}

\newtheorem{theorem}{Theorem}
\newtheorem{proposition}[theorem]{Proposition}
\newtheorem{lemma}[theorem]{Lemma}
\newtheorem{corollary}[theorem]{Corollary}
\theoremstyle{definition}

\theoremstyle{remark}
\newtheorem{remark}[theorem]{Remark}

\title[Marcus--Wyse topology and formal verification]{Construction and formal verification of a special topology on $\mathbb Z^n$: characterization of connected subsets}
\author{Ignacio Vega-Páez}
\address{Centro de Investigación y de Estudios Avanzados del Instituto Politécnico Nacional (CINVESTAV-IPN), Mexico City, Mexico}
\email{ignaciovp633@gmail.com}
\thanks{Corresponding author: Ignacio Vega-Páez, \texttt{ignaciovp633@gmail.com}.}
\date{October 27, 2026}
\subjclass[2020]{54A05, 54D05, 05C40, 03B35}
\keywords{Marcus--Wyse topology, integer lattice, connectedness, adjacency, formal proof, Lean 4, Mathlib}

\begin{document}
\begin{abstract}
We examine whether the integers admit a topology whose connected subsets are
precisely the sets of consecutive integers. The main result gives a
construction for every integer lattice $\mathbb Z^n$, $n\geq1$; the answer for
$\mathbb Z$ is then obtained as a corollary. We prove that the neighborhoods
determined by the parity of the sum of the coordinates form a basis and that
topological connectedness agrees exactly with accessibility by finite paths
of unit lattice steps. We also present a deductive audit and the
correspondence between this argument and a verification in Lean~4 and
Mathlib, including the empty set and the independent formal characterization
of integer intervals. The construction and mathematical result
originate in Problem 5712, proposed by Dan Marcus and solved in 1970 by the
Cleveland State University group; the present article contributes a
detailed exposition and a report on formal verification.
\end{abstract}

\maketitle

\section{Introduction and origin of the problem}

The motivating question is:
\begin{quote}
\emph{Is it possible to endow the set of integers with a topology in which
the connected subsets are precisely the sets of consecutive integers?}
\end{quote}

Its natural generalization replaces the integer line by $\mathbb Z^n$ and
consecutiveness by the existence of a finite path of adjacent points.
The question appeared as Problem 5712 in \emph{The American Mathematical
Monthly}. It was proposed by Dan Marcus, and its solution is attributed
to the Cleveland State University Problem Solving Group, with Frank Wyse
as adviser \cite{marcus1970}. The associated topology is commonly known
as the Marcus--Wyse topology.

Our purpose is not to claim a new solution to the historical question,
but to state it precisely, supply all the necessary logical steps, and
describe its subsequent formal verification in Lean~4/Mathlib. The main
statement is first formulated for $\mathbb Z^n$; the result for $\mathbb Z$
is then deduced as a special case.

\section{Definitions and main theorem}

Let $n\geq1$ and set $L_n=\mathbb Z^n$. For
$x=(x_1,\ldots,x_n)$ and $y=(y_1,\ldots,y_n)$ in $L_n$,
define \emph{unit adjacency} by
\begin{equation}\label{eq:adjacency}
 x\,R_n\,y
 \quad\Longleftrightarrow\quad
 d_1(x,y):=\sum_{k=1}^{n}|x_k-y_k|=1.
\end{equation}
Equivalently, $x$ and $y$ differ by one in exactly one coordinate
and agree in all the others. Define also
\begin{equation}\label{eq:parity}
 p(x)=\sum_{k=1}^{n}x_k.
\end{equation}

For $C\subseteq L_n$ and $x,y\in C$, write $xE_Cy$ and say
that $x$ and $y$ are \emph{accessible within $C$} if there are
$m\geq1$ and $z_1,\ldots,z_m\in C$ such that
\begin{equation}\label{eq:path}
 z_1=x,\qquad z_m=y,\qquad z_iR_nz_{i+1}\quad(1\leq i<m).
\end{equation}
A one-point path ($m=1$) is allowed when $x=y$.
We call $C$ \emph{consecutive in the lattice} if every pair of its
points is accessible in this way. The condition holds vacuously
for $C=\varnothing$.

For each $x\in L_n$, define
\begin{equation}\label{eq:basic}
 U_x=\begin{cases}
 \{x\}\cup\{y\in L_n:xR_ny\}, & p(x)\text{ is even},\\
 \{x\}, & p(x)\text{ is odd}.
 \end{cases}
\end{equation}

\begin{theorem}[Construction and characterization of the topology on $\mathbb Z^n$]
\label{thm:main}
For every integer $n\geq1$, the family
\[
 \mathcal B_n=\{U_x:x\in\mathbb Z^n\}
\]
is a basis\footnote{A family $\mathcal B$ of subsets of $X$ is a
\emph{basis for a topology} if
$\bigcup_{B\in\mathcal B}B=X$ and, whenever
$B_1,B_2\in\mathcal B$ and $z\in B_1\cap B_2$, some
$B_3\in\mathcal B$ satisfies
$z\in B_3\subseteq B_1\cap B_2$.
The open sets in the generated topology are precisely the
unions of elements of $\mathcal B$.}
for a topology $\tau_n$ on $\mathbb Z^n$.
In this topology, $U_x$ is the smallest open neighborhood of $x$.
Moreover, for every $C\subseteq\mathbb Z^n$, connectedness\footnote{
A subset $C$ of a topological space is \emph{connected} if it
cannot be written as the union of two disjoint, nonempty subsets
that are open in the subspace topology on $C$.
Under this definition, the empty set is connected.}
is understood in the subspace topology, and
\begin{equation}\label{eq:main}
 \boxed{\qquad C\text{ is connected in }(\mathbb Z^n,\tau_n)
 \quad\Longleftrightarrow\quad
 (\forall x,y\in C)\;xE_Cy.\qquad}
\end{equation}
The equivalence includes $C=\varnothing$.
\end{theorem}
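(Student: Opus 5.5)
The proof has three parts: the basis property, the minimality of $U_x$, and the equivalence \eqref{eq:main}. We first record one parity observation. If $xR_ny$, then $p(x)$ and $p(y)$ differ by exactly one, so adjacent points have opposite parity. Hence if $p(x)$ is odd, then $U_x=\{x\}$. If $p(x)$ is even, then every $y\in U_x\setminus\{x\}$ is odd with $U_y=\{y\}$.

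\textbf{Basis and minimality.} Covering is immediate because $x\in U_x$. For the intersection condition, take $z\in U_a\cap U_b$; it suffices to show $U_z\subseteq U_a\cap U_b$, since then $B_3=U_z$ works. If $p(z)$ is odd, then $U_z=\{z\}$ and there is nothing to prove. If $p(z)$ is even, then $z$ cannot lie in $U_a\setminus\{a\}$, because those points are odd; so $z=a$, and likewise $z=b$, giving $U_z=U_a=U_b$. The same observation gives minimality. Let $V$ be open with $x\in V$. Then $V$ is a union of basic sets, so $x\in U_a\subseteq V$ for some $a$, and the argument just given yields $U_x\subseteq U_a\subseteq V$. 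Thus $U_x$ is the smallest open neighbourhood of $x$.

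\textbf{Paths imply connected.} The empty set is handled by both footnote conventions: it is connected, and the path condition holds vacuously. Next, any adjacent pair $\{x,y\}$ is connected. Say $p(x)$ is even; then $y\in U_x$. In the subspace topology on $\{x,y\}$, every open set containing $x$ contains $U_x\cap\{x,y\}=\{x,y\}$, so $\{x\}$ is not open and no separation exists. Now suppose every pair of points of $C$ is accessible within $C$, and let $C=A\sqcup B$ be a separation into relatively open sets. Pick $x\in A$ and $y\in B$, and a path $z_1,\dots,z_m$ in $C$ from $x$ to $y$. Some consecutive pair has $z_i\in A$ and $z_{i+1}\in B$. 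Then $A\cap\{z_i,z_{i+1}\}$ and $B\cap\{z_i,z_{i+1}\}$ separate the connected set $\{z_i,z_{i+1}\}$, a contradiction. Alternatively, one can use the standard fact that a union of connected sets chained by nonempty overlaps is connected.

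\textbf{Connected implies paths.} This is the direction I expect to need the most care. Let $C$ be connected and nonempty, fix $x\in C$, and let $A=\{y\in C: xE_Cy\}$ and $B=C\setminus A$. The plan is to show that both $A$ and $B$ are relatively open in $C$. Connectedness then forces $B=\varnothing$, since $x\in A$.

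\begin{itemize}
\item It suffices to show that $U_w\cap C$ lies entirely in $A$ or entirely in $B$ for every $w\in C$, because then $A$ and $B$ are unions of such sets.
\item If $p(w)$ is odd, then $U_w\cap C=\{w\}$ and the claim is trivial.
\item If $p(w)$ is even, every point of $U_w\cap C$ is either $w$ or adjacent to $w$ inside $C$. All these points are therefore mutually accessible within $C$ through $w$, using transitivity and symmetry of $E_C$ (concatenate and reverse paths). So they all lie in $A$ together or in $B$ together.
\end{itemize}

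The main subtlety is exactly this use of the minimal neighbourhoods. Relative openness must be witnessed by $U_w\cap C$ itself, so all points of that set must be shown to share a class. Establishing that $E_C$ is an equivalence relation, with one-point paths supplying reflexivity, is routine bookkeeping.
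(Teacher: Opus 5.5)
Your proposal is correct and follows essentially the paper's route: parity reversal along edges, the inclusion $U_z\subseteq U_a\cap U_b$ for the basis and minimality, relatively open accessibility classes for connected $\Rightarrow$ paths, and a path edge crossing a putative separation for the converse. The differences are cosmetic: you check the intersection axiom by a direct parity case split instead of first introducing the description \eqref{eq:topology} of $\tau_n$, and you package the crossing-edge contradiction as connectedness of adjacent two-point sets, which is the paper's observation that any open set containing the even endpoint $e$ contains $U_e$ and hence the odd endpoint.
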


\section{Mathematical proof of the main theorem}

\begin{lemma}[Symmetry of adjacency and reversal of parity]
\label{lem:parity}
If $xR_ny$, then $yR_nx$, and $p(x)$ and $p(y)$ have opposite
parities.
\end{lemma}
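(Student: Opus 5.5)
The plan is to work directly from the definition of $R_n$ in \eqref{eq:adjacency}, treating symmetry and the parity claim separately.

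For symmetry, note that $d_1(x,y)=\sum_{k}|x_k-y_k|$ is symmetric in $x$ and $y$, because $|x_k-y_k|=|y_k-x_k|$ for each $k$. Hence $d_1(x,y)=1$ if and only if $d_1(y,x)=1$, which gives $yR_nx$ whenever $xR_ny$.

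For the parity reversal, I would first extract the structure of an adjacent pair. Each term $|x_k-y_k|$ is a nonnegative integer and the terms sum to $1$. So exactly one index $j$ has $|x_j-y_j|=1$, and every other index $k\neq j$ has $x_k=y_k$. In other words, $y_j=x_j\pm1$ while all other coordinates agree.

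Next I compute the difference of the parity functions from \eqref{eq:parity}:
\[
p(y)-p(x)=\sum_{k=1}^n (y_k-x_k)=y_j-x_j=\pm1 .
\]
Since this difference is odd, $p(x)$ and $p(y)$ cannot both be even or both be odd. Equivalently, $p(y)\equiv p(x)+1 \pmod 2$.

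There is no genuine obstacle in this argument. The only step needing care is the claim that a sum of nonnegative integers equal to $1$ has exactly one nonzero term, and that this term equals $1$. This holds because each nonzero term is at least $1$, so two or more nonzero terms would make the sum at least $2$.
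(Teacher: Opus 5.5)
Your proof is correct, and your symmetry argument is the same as the paper's. For the parity claim you take a slightly different route. You first decompose an adjacent pair (exactly one coordinate differs, by $\pm1$) and then compute $p(y)-p(x)=\pm1$ exactly. The paper skips the decomposition and reduces modulo $2$ at once, using $|a|\equiv a\pmod 2$ to get $p(x)-p(y)\equiv\sum_k|x_k-y_k|=1\pmod 2$. The paper's version is shorter and needs no argument about sums of nonnegative integers. It also gives the more general fact that $p(x)-p(y)\equiv d_1(x,y)\pmod 2$ for every pair $x,y$. Your version yields the sharper integer identity $p(y)-p(x)=\pm1$. As a by-product, it also proves the ``differ by one in exactly one coordinate'' reformulation that the paper states after \eqref{eq:adjacency} without justification.
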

\begin{proof}
Symmetry follows from $|x_k-y_k|=|y_k-x_k|$.
For every $a\in\mathbb Z$ one has $|a|\equiv a\pmod2$.
Consequently,
\[
 p(x)-p(y)=\sum_{k=1}^{n}(x_k-y_k)
 \equiv\sum_{k=1}^{n}|x_k-y_k|
 =1\pmod2.
\]
Thus $p(x)$ and $p(y)$ have different parities.
\end{proof}

\begin{remark}[Interpretation of parity]
\label{rem:parity}
The map
\[
 \pi:L_n\longrightarrow\mathbb Z/2\mathbb Z,
 \qquad \pi(x)=p(x)\bmod 2,
\]
classifies lattice points as even ($\pi(x)=0$) or odd
($\pi(x)=1$). By Lemma~\ref{lem:parity}, $xR_ny$ implies
$\pi(x)\neq\pi(y)$. This explains definition
\eqref{eq:basic} without introducing further hypotheses.
\end{remark}

\begin{proposition}[Basis and minimal open neighborhoods]
\label{prop:basis}
The family $\mathcal B_n$ satisfies the basis axioms.
Its generated topology is
\begin{equation}\label{eq:topology}
 \tau_n=\bigl\{O\subseteq L_n:\
 \forall x\in O,\ U_x\subseteq O\bigr\}.
\end{equation}
Every $U_x$ is open and is contained in every open set
containing $x$.
\end{proposition}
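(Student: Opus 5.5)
The plan is to reduce everything to one inclusion property of the sets $U_x$:
\[
 y\in U_x \;\Longrightarrow\; U_y\subseteq U_x. \qquad(\ast)
\]
Once $(\ast)$ holds, the basis axioms and the description \eqref{eq:topology} follow formally.

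\emph{Proof of $(\ast)$.} If $y=x$ there is nothing to show. Otherwise $y\neq x$ and $y\in U_x$, so by \eqref{eq:basic} the point $x$ must be even and $xR_ny$. By Lemma~\ref{lem:parity}, $y$ is then odd. Hence $U_y=\{y\}\subseteq U_x$.

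\emph{Basis axioms.} Covering is immediate, since $x\in U_x$ for every $x$. For the intersection axiom, suppose $z\in U_{x_1}\cap U_{x_2}$. Take $B_3=U_z$. Clearly $z\in U_z$, and applying $(\ast)$ twice gives $U_z\subseteq U_{x_1}\cap U_{x_2}$.

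\emph{Identification of $\tau_n$.} Let $\tau'$ denote the right-hand side of \eqref{eq:topology}. I will show that $\tau'$ coincides with the family of unions of basis elements.

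First, a union of basis sets lies in $\tau'$. If $x\in\bigcup_i U_{x_i}$, then $x\in U_{x_i}$ for some $i$, and $(\ast)$ gives $U_x\subseteq U_{x_i}$, which is contained in the union.

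Conversely, let $O\in\tau'$. Then $O=\bigcup_{x\in O}U_x$: the inclusion $\supseteq$ holds by the defining condition of $\tau'$, and $\subseteq$ holds because $x\in U_x$. The empty set is the empty union, so it is covered as well.

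\emph{Minimality.} Each $U_x$ is open, being a single basis element. If $O$ is any open set containing $x$, then $O\in\tau'$, so $U_x\subseteq O$ directly from \eqref{eq:topology}.

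\emph{Expected obstacle.} I do not expect a genuine obstacle. The one step needing care is $(\ast)$, specifically the case split on the parity of $x$ and the use of Lemma~\ref{lem:parity} to conclude that a neighbour of an even point is odd. Without that fact, $U_y$ for a neighbour $y$ could be a full star not contained in $U_x$. The remaining bookkeeping is simply being explicit that ``generated topology'' means arbitrary unions of basis elements, including the empty union.
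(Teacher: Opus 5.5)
Your proposal is correct and follows the paper's proof: your property $(\ast)$ is exactly the paper's check that each $U_x$ satisfies the openness condition \eqref{eq:topology}, and you handle the basis axioms and the identification $O=\bigcup_{x\in O}U_x$ the same way. The differences are minor. You read minimality directly off \eqref{eq:topology} once the identification is established, whereas the paper re-derives $U_x\subseteq U_y$ for $x\in U_y$ by a second parity split. You also omit the paper's separate verification that the right-hand side of \eqref{eq:topology} is a topology, which is redundant once it is shown equal to the generated topology.
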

\begin{proof}
The family on the right-hand side of \eqref{eq:topology}
is a topology: both $\varnothing$ and $L_n$ satisfy the
condition, which is preserved under arbitrary unions and
finite intersections.

We first prove that every $U_x$ is open. If $p(x)$ is odd, then
$U_x=\{x\}$, and for its only point we have
$U_x=\{x\}\subseteq U_x$. If $p(x)$ is even and $z\in U_x$,
there are two possibilities. If $z=x$, then $U_z=U_x$.
If $z\neq x$, definition \eqref{eq:basic} yields $xR_nz$;
by Lemma~\ref{lem:parity}, $p(z)$ is odd and
$U_z=\{z\}\subseteq U_x$. Thus $U_x$ satisfies
the openness condition \eqref{eq:topology} in every case.

Since $x\in U_x$ for every $x\in L_n$, the family
$\mathcal B_n$ covers $L_n$. Moreover, if
$z\in U_x\cap U_y$, the openness of $U_x$ and $U_y$
under condition \eqref{eq:topology} implies
$U_z\subseteq U_x$ and $U_z\subseteq U_y$, hence
\[
 z\in U_z\subseteq U_x\cap U_y.
\]
The basis intersection axiom follows.
Moreover, if $O$ satisfies (6), then
$O=\bigcup_{x\in O}U_x$, since $x\in U_x\subseteq O$
for every $x\in O$. Conversely, every union of members of
$\mathcal B_n$ satisfies (6), by the openness of each $U_x$
already established. Hence the topology generated by
$\mathcal B_n$ is exactly $\tau_n$.

To establish minimality, suppose $x\in U_y$.
If $p(x)$ is even, necessarily $y=x$, so $U_x=U_y$;
if $p(x)$ is odd, then $U_x=\{x\}\subseteq U_y$.
Therefore $U_x\subseteq U_y$ whenever $x\in U_y$.
Since any open set $O$ containing $x$ is a union of
members of $\mathcal B_n$, there is some $U_y$ with
$x\in U_y\subseteq O$. The preceding inclusion gives
$U_x\subseteq O$, as required.
\end{proof}

\begin{lemma}[Accessibility classes]
\label{lem:classes}
Fix $C\subseteq L_n$. Accessibility as defined in
\eqref{eq:path} is an equivalence relation on $C$.
If $A$ is one of its equivalence classes, both $A$ and
$C\setminus A$ are open in the subspace topology on $C$.
\end{lemma}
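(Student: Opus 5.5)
First I would check the equivalence-relation axioms for $E_C$ directly from \eqref{eq:path}. Reflexivity comes from the one-point path ($m=1$). Symmetry comes from reversing a path $z_1,\dots,z_m$ into $z_m,\dots,z_1$; each step stays adjacent because $R_n$ is symmetric by Lemma~\ref{lem:parity}. Transitivity comes from concatenating a path from $x$ to $y$ with a path from $y$ to $w$, dropping the repeated point $y$. All points involved stay in $C$, so the concatenated path is admissible.

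For openness I would use the description of $\tau_n$ in Proposition~\ref{prop:basis}. A subset $V\subseteq C$ is open in the subspace topology as soon as $U_x\cap C\subseteq V$ for every $x\in V$. Indeed, then $V=\bigcup_{x\in V}(U_x\cap C)$, which is a union of traces on $C$ of the open sets $U_x$. So it suffices to prove one \emph{key claim}: for every $x\in C$ and every $y\in U_x\cap C$, we have $xE_Cy$. If $y=x$ this is reflexivity. If $y\neq x$, then $p(x)$ is even and $xR_ny$ by \eqref{eq:basic}. Hence the two-point path $z_1=x$, $z_2=y$ lies in $C$ and witnesses $xE_Cy$.

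Given the key claim, both openness statements follow. If $A$ is a class and $x\in A$, every $y\in U_x\cap C$ is equivalent to $x$, so it lies in $A$; thus $U_x\cap C\subseteq A$ and $A$ is open in $C$. If $x\in C\setminus A$ and some $y\in U_x\cap C$ lay in $A$, then $xE_Cy$ would put $x$ in $A$, a contradiction. Hence $U_x\cap C\subseteq C\setminus A$ and $C\setminus A$ is open in $C$. Equivalently, $C\setminus A$ is the union of the other classes, each of which is open.

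I do not expect a serious obstacle. The only point needing care is the key claim in the case $y\neq x$. There one must notice that $y\in U_x$ with $y\neq x$ forces $p(x)$ even and $xR_ny$; when $p(x)$ is odd, $U_x=\{x\}$ and nothing needs checking. The empty case $C=\varnothing$ needs no separate treatment: there are no classes, and the argument is vacuous.
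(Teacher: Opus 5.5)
Your proposal is correct and follows essentially the same route as the paper: the equivalence axioms come from one-point paths, reversal, and concatenation, and openness of $A$ comes from the saturation property $U_x\cap C\subseteq A$, which gives $A=\bigcup_{x\in A}(U_x\cap C)$. Your direct check that $U_x\cap C\subseteq C\setminus A$ for $x\in C\setminus A$ is only a rephrasing of the paper's observation that $C\setminus A$ is the union of the other (open) classes.
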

\begin{proof}
Reflexivity follows from the one-point path;
symmetry follows by reversing a path and using
symmetry of $R_n$; transitivity follows by concatenating
two paths sharing an endpoint.

Let $A$ be an accessibility class. To show that $A$
is open in the subspace topology, fix $x\in A$.
If $y\in U_x\cap C$, then either $y=x$ or $xR_ny$;
in either case $x$ and $y$ are accessible within $C$,
so $y\in A$. Consequently,
\begin{equation}\label{eq:saturation}
 x\in A\quad\Longrightarrow\quad U_x\cap C\subseteq A.
\end{equation}
Each $U_x\cap C$ is open in the subspace topology on $C$.
Since $x\in U_x$ and \eqref{eq:saturation} holds, we obtain
\[
 A=\bigcup_{x\in A}(U_x\cap C).
\]
Hence $A$ is open in $C$. The same argument applies to
every other accessibility class, and their union is
$C\setminus A$. Thus $C\setminus A$ is also open in $C$.
\end{proof}

\begin{proof}[Proof of Theorem~\ref{thm:main}]
Proposition~\ref{prop:basis} establishes the basis and
the minimality of the open neighborhoods.

\emph{First implication.}
Suppose $C$ is connected. If $C=\varnothing$, the
accessibility condition is vacuous. Otherwise choose
$x_0\in C$ and let $A$ be its accessibility class.
By Lemma~\ref{lem:classes}, the disjoint subsets
$A$ and $C\setminus A$ are relatively open and cover
$C$. Since $x_0\in A$, connectedness forces
$C\setminus A=\varnothing$. Thus $A=C$, so every
pair of points in $C$ can be joined by a finite path
within $C$.

\emph{Second implication.}
Suppose every pair of points of $C$ can be joined
by a path within $C$. If $C$ were disconnected,
there would be disjoint, nonempty, relatively open
sets $A$ and $B$ such that $C=A\cup B$.
Choose $a\in A$ and $b\in B$. By assumption,
there is a path $a=z_1,\ldots,z_m=b$ contained in $C$.
Let $j$ be the least index for which $z_j\in B$.
Such an index exists since $z_m=b\in B$, and $j>1$
since $z_1=a\in A$. By minimality of $j$ and the
disjoint partition $C=A\cup B$, we have
\[
 z_{j-1}\in A,\qquad z_j\in B,\qquad z_{j-1}R_nz_j.
\]
By Lemma~\ref{lem:parity}, exactly one endpoint of this
edge has even coordinate sum. Let $e$ be that endpoint
and $o$ the other. Definition \eqref{eq:basic}
implies $o\in U_e$. If $e\in A$, relative openness
of $A$ and minimality of $U_e$ force
$U_e\cap C\subseteq A$, whence $o\in A$, contradicting
$o\in B$. If $e\in B$, interchanging $A$ and $B$
gives the same contradiction. Therefore no separation
of $C$ exists, and $C$ is connected.

For $C=\varnothing$, both properties hold: it cannot
be separated into two nonempty relatively open sets,
and there is no pair of points that must be connected.
This proves \eqref{eq:main}.
\end{proof}

\section{Special case: a topology on the integers}

Recall that $C\subseteq\mathbb Z$ is an
\emph{integer interval}, or a \emph{set of consecutive
integers}, if
\begin{equation}\label{eq:interval}
 a,b\in C,\quad a\leq t\leq b,\quad t\in\mathbb Z
 \quad\Longrightarrow\quad t\in C.
\end{equation}
This includes the empty set, singletons, finite intervals,
and infinite intervals.

\begin{corollary}[Answer to the question on the integers]
\label{cor:integers}
For $\mathbb Z$, consider the family
\begin{equation}\label{eq:zbasic}
 U_m=\begin{cases}
 \{m-1,m,m+1\},&m\text{ is even},\\
 \{m\},&m\text{ is odd}.
 \end{cases}
\end{equation}
There is a topology $\tau_1$ with basis
$\{U_m:m\in\mathbb Z\}$ such that, for every
$C\subseteq\mathbb Z$,
\[
 \boxed{\quad C\text{ is connected in }(\mathbb Z,\tau_1)
 \iff C\text{ is an integer interval}.\quad}
\]
In particular, the answer to the original question is affirmative.
\end{corollary}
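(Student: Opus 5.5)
The plan is to specialize Theorem~\ref{thm:main} to $n=1$ and then show that, in $\mathbb Z$, being consecutive in the lattice is the same as being an integer interval in the sense of \eqref{eq:interval}. For $n=1$ we have $d_1(x,y)=|x-y|$, so $xR_1y$ holds exactly when $|x-y|=1$, and $p(m)=m$. Hence the sets $U_m$ in \eqref{eq:basic} reduce to \eqref{eq:zbasic}: for even $m$ the neighbors of $m$ are $m\pm1$, and for odd $m$ we get $\{m\}$. Proposition~\ref{prop:basis} then gives the basis and the topology $\tau_1$. By \eqref{eq:main}, it remains to show that, for every $C\subseteq\mathbb Z$, $(\forall x,y\in C)\,xE_Cy$ holds if and only if $C$ is an integer interval. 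The empty set satisfies both conditions vacuously.

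\emph{Interval $\Rightarrow$ accessible.} Take $x,y\in C$; by symmetry of $E_C$ (Lemma~\ref{lem:classes}) we may assume $x\le y$. The path is $z_i=x+i-1$ for $1\le i\le m:=y-x+1$. Each $z_i$ satisfies $x\le z_i\le y$, so $z_i\in C$ by \eqref{eq:interval}. Consecutive terms differ by $1$, so $z_iR_1z_{i+1}$. When $x=y$ this is the allowed one-point path.

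\emph{Accessible $\Rightarrow$ interval.} This is the step needing a genuine argument, a discrete intermediate value principle. Let $a,b\in C$ and $a\le t\le b$, and take a path $a=z_1,\dots,z_m=b$ in $C$ with $|z_{i+1}-z_i|=1$. I would prove by induction on $m$ that every integer between $z_1$ and $z_m$ occurs among the $z_i$. Alternatively, and more directly: if $t=b$ we are done. Otherwise $t<b$, and we let $j$ be the least index with $z_j>t$; this exists because $z_m=b>t$, and $j>1$ because $z_1=a\le t$. Then $z_{j-1}\le t<z_j$, and since $z_j-z_{j-1}\le1$ we get $z_{j-1}=t$. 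Thus $t\in C$. This least-index argument mirrors the one in the second implication of the proof of Theorem~\ref{thm:main}, and I expect it to be the only point requiring care, especially handling the boundary case $t=b$ separately so that the index $j$ exists.

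Combining the two implications with \eqref{eq:main} yields the boxed equivalence. In particular, the sets of consecutive integers are exactly the connected subsets of $(\mathbb Z,\tau_1)$, so the answer to the original question is affirmative.
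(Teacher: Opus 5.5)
Your proposal is correct and follows essentially the same route as the paper: you specialize Theorem~\ref{thm:main} to $n=1$, use the path $a,a+1,\dots,b$ for the direction interval $\Rightarrow$ accessible, and use a least-index discrete intermediate value argument for the converse. The only difference is cosmetic. You take the least $j$ with $z_j>t$, conclude $z_{j-1}=t$, and treat $t=b$ separately. The paper instead assumes $a<t<b$, takes the least $j$ with $z_j\ge t$, and concludes $z_j=t$.
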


\begin{proof}
For $n=1$, adjacency \eqref{eq:adjacency} becomes
$|a-b|=1$, and \eqref{eq:basic} agrees exactly
with \eqref{eq:zbasic}. Theorem~\ref{thm:main}
states that $C$ is connected if and only if every
pair of its points can be joined by a sequence of
integers in $C$ whose successive absolute differences
are one.

If $C$ satisfies \eqref{eq:interval} and $a<b$
belong to $C$, the sequence $a,a+1,\ldots,b$
is such a path lying entirely in $C$.
For $a=b$, a one-point path suffices; for $a>b$,
reverse the order.

Conversely, assume the path property, and let
$a<t<b$ with $a,b\in C$. Take a path
$a=z_1,\ldots,z_m=b$ in $C$ with
$|z_{i+1}-z_i|=1$. Choose the first index $j$
such that $z_j\geq t$. It exists because
$z_m=b>t$, and $j>1$ since $z_1=a<t$.
We have $z_{j-1}<t$ and $|z_j-z_{j-1}|=1$;
therefore $z_j=t$. Thus $t\in C$, proving
\eqref{eq:interval}. The empty and singleton
cases are immediate.
\end{proof}

\section{Methodology of formal verification in Lean 4}

\subsection{Correspondence of mathematical objects}

The reported verification was implemented in Lean~4
with the Mathlib mathematical library \cite{mathlib,lean}.
The lattice $\mathbb Z^n$ was represented by the function type
\texttt{Fin n}\,$\to$\,$\mathbb Z$. On this type,
\texttt{LatticeAdj} (equation \eqref{eq:adjacency})
and \texttt{LatticeEven} (the parity of
\eqref{eq:parity}) were defined.

The topology was encoded using the openness condition
\eqref{eq:topology}, expressed as stability of an open set
under neighbors of a point with even coordinate sum.
Its equivalence with the family $\mathcal B_n$
was formally justified through the statements
\begin{center}
\small
\texttt{basic\_is\_open},\quad
\texttt{open\_iff\_contains\_basic},\\
\texttt{basic\_intersection\_property}.
\end{center}

To model paths inside $C$, adjacency was restricted
to pairs of points both belonging to $C$, and named
\texttt{EdgeIn}. For this relation, the equivalence
closure \texttt{Relation.EqvGen} was used.
Its equivalence to the reflexive-transitive closure
\texttt{Relation.ReflTransGen}, for a symmetric relation,
was verified through the lemma
\begin{center}
\texttt{eqvGen\_iff\_finite\_walk}.
\end{center}

To verify Corollary~\ref{cor:integers} explicitly,
\texttt{IntegerAdj} was defined for the relation $|a-b|=1$.
Integer parity was encoded as \texttt{IntegerEven},
and order convexity \eqref{eq:interval} was represented
by \texttt{IsIntegerInterval}. The basic neighborhoods
in \eqref{eq:zbasic} were verified by the theorems
\begin{center}
\small
\texttt{integer\_basic\_even},\quad
\texttt{integer\_basic\_odd}.
\end{center}

\subsection{Connectedness of the empty set}

It is essential to distinguish \texttt{IsConnected},
which in Mathlib requires a set to be nonempty,
from \texttt{IsPreconnected}, which expresses the
absence of a separation into nonempty relatively
open sets and permits the empty set \cite{mathlibconn}.
The latter notion was used to preserve the
quantification of the main theorem over \emph{all}
subsets $C$.

\subsection{Verification sequence}

The work was divided into four logical levels:
\begin{enumerate}[label=\arabic*.]
\item The topology axioms, basic neighborhoods,
      their openness, and the basis property were formalized.
\item Both implications relating connectedness to paths
      were proved for an abstract symmetric relation
      that reverses parity along adjacent points.
      This stage decomposes the argument into
      accessibility classes, relatively open sets,
      and propagation along a path.
\item The result was specialized to $\mathbb Z^n$,
      deriving symmetry and parity reversal directly
      from \eqref{eq:adjacency}. Thus the abstract
      assumptions are not additional hypotheses
      in the lattice formulation.
\Needspace{5\baselineskip}
\item The characterization on $\mathbb Z$ was formalized separately.
      Both implications between order convexity and accessibility
      by paths of unit steps were proved, as was the equivalence
      between order convexity and topological connectedness.
\end{enumerate}

The results corresponding to the main theorem and the corollary
appear in the source file, respectively, as
\begin{center}
\texttt{lattice\_preconnected\_iff\_consecutive},\\
\texttt{integer\_preconnected\_iff\_interval}.
\end{center}
Their mathematical statements are, respectively, the equivalence
\eqref{eq:main} and the characterization in
Corollary~\ref{cor:integers}.
The compilation with Lean~4.36.0-rc1
and Mathlib completed with exit status zero, no errors or warnings, and
without \texttt{sorry}, \texttt{admit}, or new
axiom declarations in the proof.

\subsection{Correspondence audit}

\begin{table}[ht]
\centering
\caption{Mathematical proof obligations and formalized lemmas.}
\label{tab:audit}
\small
\begin{tabularx}{\linewidth}{@{}>{\raggedright\arraybackslash}p{0.33\linewidth}>{\raggedright\arraybackslash}X@{}}
\toprule
Mathematical obligation & Verification in Lean \\
\midrule
Symmetry of $R_n$ and parity reversal &
\texttt{latticeAdj\_symm}, \texttt{latticeAdj\_flips\_parity} \\
Basis and minimal neighborhood &
\texttt{basic\_is\_open}, \texttt{basic\_le\_open},
\texttt{basic\_intersection\_property} \\
Finite path and accessibility relation &
\texttt{eqvGen\_iff\_finite\_walk} \\
Relatively open classes &
\texttt{pathClass\_saturated}, \texttt{expand\_inter\_eq} \\
Connectedness implies paths &
\texttt{preconnected\_implies\_consecutive} \\
Paths imply connectedness &
\texttt{consecutive\_implies\_preconnected} \\
Complete theorem on $\mathbb Z^n$ &
\texttt{lattice\_preconnected\_iff\_consecutive} \\
Basic neighborhoods at even and odd integers &
\texttt{integer\_basic\_even}, \texttt{integer\_basic\_odd} \\
Integer intervals imply unit-step paths &
\texttt{integer\_interval\_implies\_consecutive} \\
Unit-step paths imply integer intervals &
\texttt{integer\_consecutive\_implies\_interval} \\
Complete corollary on $\mathbb Z$ &
\texttt{integer\_preconnected\_iff\_interval} \\
\bottomrule
\end{tabularx}
\end{table}

The audit distinguishes \emph{formal correctness} from
\emph{historical fidelity}: Lean checks the statement
and its inferences under the formalized definitions,
whereas attribution and presentation choices are
checked against the published formulation of
Problem 5712.

\section{Discussion and conclusions}

Theorem~\ref{thm:main} constructively solves the
problem in all dimensions $n\geq1$.
Parity selects centers whose neighborhoods include
unit lattice steps, while points of opposite parity
have singleton neighborhoods. This asymmetry of
neighborhoods yields an exact correspondence between
the impossibility of separating a subset topologically
and the possibility of traversing it by a finite
sequence of internal adjacencies.

Corollary~\ref{cor:integers} recovers the original
question on $\mathbb Z$. Its characterization of
connected subsets includes finite, infinite, and
singleton intervals, as well as the empty set.
In addition to the verification of the general theorem
on $\mathbb Z^n$, Lean now directly certifies the equivalence
between connectedness and order convexity on $\mathbb Z$,
together with the formulas for the basic neighborhoods
of even and odd integers. These proofs cover both implications
and correctly handle the empty set. No experiments
on finite subsets or second formalization in
another proof assistant were used.

The topological construction and the affirmative
answer are historical results associated with the
Marcus--Wyse problem. The contribution of this work
is to present the generalization as the main theorem,
derive the one-dimensional case rigorously, and
document the correspondence between the mathematical
proofs and their Lean~4 verification.

\end{document}